\newif\ifcolor
\newif\iflong
\newcommand{\N}{\mathbb{N}}
\renewcommand{\G}{\mathcal{G}}
\DeclareMathOperator*{\E}{\mathbb{E}}
\newcommand{\Exp}[2][]{\ifthenelse{\isempty{#1}}{\E[#2]}{\E_{#1}[#2]}}
\newcommand{\Expl}[2][]{\ifthenelse{\isempty{#1}}{\E[#2]}{\E\nolimits_{#1}[#2]}}
\newcommand{\dtv}{\operatorname{d_{\mathrm{TV}}}}
\newcommand{\argmin}{\operatorname*{\arg\min}}
\title{A Non-Asymptotic Analysis\\of Mismatched Guesswork\\%
\thanks{This work was supported by DARPA Grant HR00112120008.}
}
\author{
\IEEEauthorblockN{
Alexander Mariona\IEEEauthorrefmark{1},
Homa Esfahanizadeh\IEEEauthorrefmark{1},
Rafael G.~L.~D'Oliveira\IEEEauthorrefmark{2}, and
Muriel M\'{e}dard\IEEEauthorrefmark{1}}

\IEEEauthorblockA{
\IEEEauthorrefmark{1}
Research Laboratory of Electronics,
Massachusetts Institute of Technology, Cambridge, MA 02139}

\IEEEauthorblockA{
\IEEEauthorrefmark{2}
School of Mathematical and Statistical Sciences,
Clemson University, Clemson, SC 29634\\
Emails: amariona@mit.edu, homaesf@mit.edu, rdolive@clemson.edu, medard@mit.edu}
}
\begin{document}

\maketitle

\begin{abstract}
The problem of mismatched guesswork considers the additional cost incurred by
using a guessing function which is optimal for a distribution $q$ when the
random variable to be guessed is actually distributed according to a different
distribution $p$. This problem has been well-studied from an asymptotic
perspective, but there has been little work on quantifying the difference in
guesswork between optimal and suboptimal strategies for a finite number of
symbols. In this non-asymptotic regime, we consider a definition for mismatched
guesswork which we show is equivalent to a variant of the Kendall tau
permutation distance applied to optimal guessing functions for the mismatched
distributions. We use this formulation to bound the cost of guesswork under
mismatch given a bound on the total variation distance between the two
distributions.
\end{abstract}

\section{Introduction}\label{sec:intro}

Let $X$ be a random variable taking values in $\{1,\dots,n\}$ according to the
distribution $p$. Suppose that we would like to try and guess the realization
of $X$ by sequentially guessing possible values according to a guessing
function $G$, which is some permutation on $\{1,\dots,n\}$, stopping when we
correctly guess the value of $X$. The guesswork associated with $G$ is
$\Exp{G(X)}$. This is the general setting of the theory of guesswork
\cite{Mas94,Ari96}.

The goal of this work is to characterize the difference in guesswork between
the guessing function which is optimal for the generating distribution $p$ and
that which is optimal for a different distribution $q$. This problem is
generally referred to as \emph{guessing under mismatch}. We define the expected
cost of guesswork under mismatch between $p$ and $q$ to be
\begin{equation*}
    \delta(p,q) = \min_{G_q} \Exp[X\sim p]{G_q(X)-G_p(X)},
\end{equation*}
where $G_p$ and $G_q$ are optimal guessing functions for $p$ and $q$
respectively. Our approach is non-asymptotic and our methods are combinatorial:
we study the cost of mismatch for a fixed, finite value of $n$ and we analyze
guessing functions as permutations. This viewpoint emphasizes the structure of
the space of probability distributions in the context of guesswork. The main
contributions of this work are (1) an equivalence between the cost of mismatch
and a variant of the Kendall tau permutation distance \cite{Ken38,KV10} and (2)
a bound on the cost of mismatch given a bound on the total variation distance
between the two distributions.

\subsection{Related Work}
In line with the majority of the literature on guesswork, the problem of
guessing under mismatch has generally been approached from an asymptotic
perspective. The problems of mismatched guesswork, universal guessing, guessing
subject to distortion, and other settings have been well-studied in that
asymptotic framework. We describe here only a selection of this work,
describing its goals and highlighting how our problem setting and results
differ.

Arikan and Merhav considered the problem of guessing to within some distortion
measure and propose asymptotically optimal guessing schemes for such a measure
\cite{AM98}. Sundaresan considered the problem of guessing over a distribution
which is a member of a known family. To this end, they defined a notion of
redundancy which quantifies the increase in the moments of guesswork when using
a suboptimal guessing function \cite{Sun06}. Both of these works propose
universal strategies for certain families of sources, where ``universal'' means
that some quantity asymptotically approaches its natural limit, e.g., the
guesswork growth exponent for a particular moment approaches its minimum over
the family, or the normalized redundancy approaches zero. Salamatian et
al.~consider the asymptotic behavior of mismatched guesswork by means of large
deviation principles \cite{SLB+19}, a theory which has found considerable
applications in the context of guesswork \cite{HS10,CD12,BCC+18}.  They treat
independent, identically distributed sources using the framework of tilted
distributions developed in \cite{BCC+15,BCC+18} and derive an expression for
limit of the average growth rate of the moments of mismatched guesswork.

Our work distinguishes itself from the above in two key ways. First and
foremost, we are not interested in asymptotic equivalences or limiting
behavior. We view guessing functions as permutations on a finite alphabet
through a framework which highlights applications of combinatorial techniques
to guesswork analysis. Second, the bounds and limits which are given in, e.g.,
\cite{AM98}, \cite{Sun06} and \cite{SLB+19} involve the difference in the
exponents of guesswork, whereas we consider the difference in guesswork
directly. Their bounds are not generally applicable to our problem setting and
cannot be directly compared.

The remainder of this paper is organized as follows. \cref{sec:prelims}
establishes our definitions. In \cref{sec:permutations} we express the expected
cost of guesswork under mismatch in terms of a variation of the Kendall tau
permutation distance. We then use this formulation of guesswork in
\cref{sec:statistical} to bound the cost of mismatch in terms of the total
variation distance between the distributions in question.
\cref{sec:conclusion} offers concluding remarks and notes potential directions
for future work.

\section{Preliminaries}\label{sec:prelims}

In this section, we review and establish definitions for guesswork,
permutations, and statistical metrics. All probability distributions are over a
finite space of cardinality $n\in\N$, which we denote by $[n]=\{1,2,\dots,n\}$.
We refer to bijections from $[n]$ to $[n]$ as either permutations or guessing
functions.  Explicitly, if $G:[n]\to[n]$ is a guessing function, then $G(i)$ is
equal to the number of guesses used to guess element $i$. We refer to
$\Expl[p]{G}=\Expl[X\sim p]{G(X)}$ as the guesswork of $G$ under $p$. The
operator $\circ$ denotes function composition (we limit its use to
permutations).

A natural set of guessing functions to consider is that consisting of those
which minimize the guesswork.
\begin{definition}
The set of \emph{optimal guessing functions} for the distribution $p$ is
defined to be $\G_p = \argmin_{G} \Expl[p]{G}$.
\end{definition}
We denote by $G_p$ an optimal guessing functions for $p$. If $\abs{\G_p}=1$,
then we may unambiguously refer to \emph{the} optimal guessing function for
$p$. A simple observation first made by Massey \cite{Mas94} is that optimal
guessing functions proceed in order of decreasing probability, i.e.,
$G_p(i)<G_p(j)$ if and only if $p_i\geq p_j$. It follows that there are
multiple optimal guessing functions for $p$ if and only if $p$ assigns the same
probability to more than one point.

One contribution of this paper is a connection between guesswork and
permutations via a metric. A classical metric on permutations is the Kendall
tau rank distance \cite{Ken38}, which is equal to the minimum number of
adjacent transpositions required to transform one permutation to another. An
adjacent transposition $\tau$ is a permutation which exchanges two consecutive
elements, i.e., $\tau(j)=j+1$ and $\tau(j+1)=j$, where $1\leq j < n$, and
$\tau(i)=i$ for all $i\notin\{j,j+1\}$.
\begin{definition}
The \emph{Kendall tau distance} between two permutations $\sigma_1$ and
$\sigma_2$ is defined to be
\begin{equation*}
    K(\sigma_1,\sigma_2) = \sum_{\substack{(i,j):\\\sigma_1(i)<\sigma_1(j)}}
        \mathds{1}\{\sigma_2(i)>\sigma_2(j)\}.
\end{equation*}
\end{definition}
We say that two permutations $\sigma_1$ and $\sigma_2$ differ by an adjacent
transposition if there exists an adjacent transposition $\tau$ such that
$\sigma_1=\tau\circ\sigma_2$. The Kendall tau distance has been generalized in
many different ways to account for different notions of distance based on the
particular applications \cite{KV10}. In \cref{sec:permutations} we introduce a
version which we show describes the difference in guesswork between two
guessing functions.

Our second main result connects the statistical distance between two
distributions and the difference in optimal guesswork for those distributions.
The statistical metric we consider is the total variation distance, which
is proportional to the $L^1$ norm for distributions over finite spaces.
\begin{definition}
The \emph{total variation distance} between two distributions $p$ and $q$ over
$[n]$ is defined to be
\begin{equation*}
    \dtv(p,q) = \frac{1}{2}\sum_{i=1}^n \abs{p_i-q_i}.
\end{equation*}
\end{definition}

\section{Guesswork and a Permutation Divergence}\label{sec:permutations}

We begin by defining our notion of difference in guesswork.
\begin{definition}\label{def:list-cost}
Let $G_1$ and $G_2$ be guessing functions. The \emph{expected cost} of $G_2$
over $G_1$ with respect to the distribution $p$ is defined to be
$\Delta_p(G_1,G_2) = \Expl[p]{G_2-G_1}$.
\end{definition}
\begin{definition}\label{def:dist-cost}
The \emph{expected cost of guesswork under mismatch} between the distributions
$p$ and $q$ is defined to be
\begin{equation*}
    \delta(p,q) = \min_{G_q\in\G_q} \Delta_p(G_p,G_q),
\end{equation*}
where $G_p$ is any optimal guessing function for $p$.
\end{definition}
\iflong
The expected cost of guesswork under mismatch is defined by a minimum over
$\G_q$ because it is not necessarily true that $\Expl[p]{G_q}=\Expl[p]{G_q'}$
for all $G_q,G_q'\in\G_q$ (see \cref{sec:min}).
\else
The expected cost of guesswork under mismatch is defined by a minimum over
$\G_q$ because it is not necessarily true that $\Expl[p]{G_q}=\Expl[p]{G_q'}$
for all $G_q,G_q'\in\G_q$.
\fi
However, the minimization in \thref{def:dist-cost} need not be taken over
$\G_p$ because, by definition, $\Expl[p]{G_p}=\Expl[p]{G_p'}$ for all
$G_p,G_p'\in\G_p$. By taking this minimum, \thref{def:dist-cost} gives the
``best-case'' cost, but we emphasize that we do not formulate the expected cost
of guesswork under mismatch in terms of particular guessing functions, but
rather distributions.  With respect to expected guesswork, we are indifferent
to the choice among multiple optimal guessing functions.

We show that $\Delta_p(G_1,G_2)$ is equivalent to a version of the Kendall tau
distance which weighs pairs of transposed elements by the difference in
their probabilities under $p$.
\begin{definition}
The \emph{probability-weighted Kendall tau signed divergence} between two
permutations $\sigma_1$ and $\sigma_2$ with respect to a distribution $p$ is
defined to be
\begin{equation*}
    K_p(\sigma_1,\sigma_2) = \sum_{\substack{(i,j):\\\sigma_1(i)<\sigma_1(j)}}
        (p_i-p_j)\cdot\mathds{1}\{\sigma_2(i)>\sigma_2(j)\}.
\end{equation*}
\end{definition}
Under this divergence, permutations which differ on elements of similar
probability are considered close. Informally, the sign of the divergence is
positive if, in total, the transposition from $\sigma_1$ to $\sigma_2$ moves
elements with higher probability to later positions and elements with lower
probability to earlier positions. This notion is formalized in the equivalence
shown in \thref{th:pweight}. This quantity is not a metric, as it is
signed and asymmetric, but these properties are desirable in this setting.

A useful result on this divergence is that it satisfies a kind of ``triangle
equality.'' This holds for any choice of three permutations, but we limit the
setting of the following lemma for simplicity. Only this weaker version is
necessary to prove \thref{th:pweight}, which itself implies the general result.

\begin{lemma}\label{lm:additive}
Let $\sigma_1$, $\sigma_2$, and $\sigma_3$ be permutations such that $\sigma_2$
and $\sigma_3$ differ by an adjacent transposition. Then, for all distributions
$p$,
\begin{equation*}
    K_p(\sigma_1,\sigma_3)=K_p(\sigma_1,\sigma_2)+K_p(\sigma_2,\sigma_3).
\end{equation*}
\end{lemma}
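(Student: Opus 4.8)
The plan is to exploit the fact that an adjacent transposition changes the induced relative order of exactly one pair of elements. Write $\sigma_3=\tau\circ\sigma_2$, where $\tau$ swaps the ranks $k$ and $k+1$, and let $u$ and $v$ be the elements with $\sigma_2(u)=k$ and $\sigma_2(v)=k+1$. Then $\sigma_3(u)=k+1$ and $\sigma_3(v)=k$, while every other element keeps its rank. The key structural observation is that $\sigma_2$ and $\sigma_3$ induce the same relative order on every pair of elements except the pair $\{u,v\}$, whose order they reverse. All three terms $K_p(\sigma_1,\sigma_3)$, $K_p(\sigma_1,\sigma_2)$, and $K_p(\sigma_2,\sigma_3)$ are sums over ordered pairs, so the whole argument reduces to isolating the contribution of $\{u,v\}$.

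First I would compute the difference $K_p(\sigma_1,\sigma_3)-K_p(\sigma_1,\sigma_2)$. Both sums range over the same index set, the pairs $(i,j)$ with $\sigma_1(i)<\sigma_1(j)$, and carry the same weight $p_i-p_j$; they differ only in whether the indicator tests $\sigma_3$ or $\sigma_2$. Since $\sigma_2$ and $\sigma_3$ agree on the order of every pair other than $\{u,v\}$, every term cancels except that of $\{u,v\}$, and the difference collapses to a single signed term.

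Second I would evaluate $K_p(\sigma_2,\sigma_3)$ directly. Here the index set consists of pairs $(i,j)$ with $\sigma_2(i)<\sigma_2(j)$, and the indicator tests $\sigma_3(i)>\sigma_3(j)$. For every $\sigma_2$-ordered pair other than $\{u,v\}$ the indicator vanishes, because $\sigma_2$ and $\sigma_3$ agree there; the only surviving term is that of $\{u,v\}$, for which $\sigma_2(u)<\sigma_2(v)$ yet $\sigma_3(u)>\sigma_3(v)$. This yields $K_p(\sigma_2,\sigma_3)=p_u-p_v$.

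Finally I would match the two computations, and this is where the only real bookkeeping obstacle lies: the weight $p_i-p_j$ is signed, and the pair $\{u,v\}$ is oriented by $\sigma_1$ in the first computation but by $\sigma_2$ in the second. I would split into the cases $\sigma_1(u)<\sigma_1(v)$ and $\sigma_1(v)<\sigma_1(u)$ and verify in each that the isolated term of $K_p(\sigma_1,\sigma_3)-K_p(\sigma_1,\sigma_2)$ equals $p_u-p_v$, using that the relevant indicator flips between $0$ and $1$ precisely because $\sigma_2$ and $\sigma_3$ disagree on $\{u,v\}$. Since both sides reduce to $p_u-p_v$, the identity $K_p(\sigma_1,\sigma_3)=K_p(\sigma_1,\sigma_2)+K_p(\sigma_2,\sigma_3)$ follows. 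The argument is a finite case check with no analytic content beyond keeping the signs consistent.
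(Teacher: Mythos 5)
Your proposal is correct and follows essentially the same route as the paper's proof: both arguments rest on the observation that an adjacent transposition flips the relative order of exactly one pair $\{u,v\}$, so all terms of $K_p(\sigma_1,\sigma_3)$ and $K_p(\sigma_1,\sigma_2)$ cancel except that pair's contribution, which is then identified with $K_p(\sigma_2,\sigma_3)=p_u-p_v$. Your organization (computing the difference directly and doing a two-case check on the $\sigma_1$-orientation of $\{u,v\}$) is just a cleaner bookkeeping of the same indicator manipulation the paper carries out by expanding and re-collecting terms.
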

\begin{proof}
Let $\tau$ be the transposition such that $\sigma_3=\tau\circ \sigma_2$ and let
$x$ and $y$ be the elements whose positions $\tau$ inverts, such that
$\sigma_2(x)=\sigma_2(y)-1$ and $\sigma_3(y)=\sigma_3(x)-1$. Then, since
$\sigma_2(z)=\sigma_3(z)$ for all $z\notin\{x,y\}$, we can write
\begin{align*}
    K_p(\sigma_1,&\sigma_3) =
    \sum_{\substack{(i,j):\\\sigma_1(i)<\sigma_1(j)}}
    (p_i-p_j)\cdot\mathds{1}\{\sigma_3(i)>\sigma_3(j)\} \\
\begin{split}
    ={}& \sum_{\substack{(i,j)\notin\{(x,y),(y,x)\}:\\\sigma_1(i)<\sigma_1(j)}}
        (p_i-p_j)\cdot\mathds{1}\{\sigma_2(i)>\sigma_2(j)\} \\
    &{}+ (p_x-p_y)\cdot\mathds{1}\{\sigma_1(x)<\sigma_1(y)\land
        \sigma_3(x)>\sigma_3(y)\} \\
    &{}+ (p_y-p_x)\cdot\mathds{1}\{\sigma_1(y)<\sigma_1(x)\land
        \sigma_3(y)>\sigma_3(x)\},
\end{split}
\end{align*}
where $\land$ denotes the intersection of events. By assumption,
$\sigma_3(x)>\sigma_3(y)$, so the term with a factor of $(p_y-p_x)$
is always zero, and we can write the term with a factor of $(p_x-p_y)$ as
\begin{align*}
    &(p_x-p_y)\cdot\mathds{1}\{\sigma_1(x)<\sigma_1(y)\land
        \sigma_3(x)>\sigma_3(y)\} = \\
    &(p_x-p_y)\cdot\mathds{1}\{\sigma_1(x)<\sigma_1(y)\} = \\
    &(p_x-p_y)
    + (p_x-p_y)\cdot\mathds{1}\{\sigma_1(x)<\sigma_1(y)\land
        \sigma_2(x)>\sigma_2(y)\} \\
    &\phantom{(p_x-p_y)} +
    (p_y-p_x)\cdot\mathds{1}\{\sigma_1(y)<\sigma_1(x)\land
    \sigma_2(y)>\sigma_2(x)\}.
\end{align*}
The second term of the final expression  is always zero since
$\sigma_2(x)<\sigma_2(y)$ by assumption. The third term is zero if
$\sigma_1(x)<\sigma_1(y)$ but cancels the constant $(p_x-p_y)$ otherwise.
Finally, it is easy to verify that $K_p(\sigma_2,\sigma_3)=p_x-p_y$.
Substituting,
\begin{align*}
\begin{split}
    K_p(\sigma_1,\sigma_3) ={}&
    \sum_{\substack{(i,j):\\\sigma_1(i)<\sigma_1(j)}}
        (p_i-p_j)\cdot\mathds{1}\{\sigma_2(i)>\sigma_2(j)\} \\
    &{\quad}+ K_p(\sigma_2,\sigma_3)
\end{split} \\
    ={}& K_p(\sigma_1,\sigma_2) + K_p(\sigma_2,\sigma_3).
\end{align*}
\end{proof}

The main result of this section is that the expected cost of guesswork under
mismatch can be formulated in terms of the probability-weighted Kendall tau
signed divergence.

\begin{theorem}\label{th:pweight}
If $G_1$ and $G_2$ are guessing functions and $p$ is a distribution, then the
expected cost of $G_2$ over $G_1$ with respect to $p$ is equal to the
probability-weighted Kendall tau signed divergence between $G_1$ and $G_2$ with
respect to $p$, i.e., $\Delta_p(G_1,G_2)=K_p(G_1,G_2)$.
\end{theorem}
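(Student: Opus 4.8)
The plan is to prove the identity by reducing to a single adjacent transposition and then building up to arbitrary $G_1,G_2$ by induction, using \cref{lm:additive} on the $K_p$ side and the manifest additivity of $\Delta_p$ on the cost side. The latter needs no adjacency hypothesis at all: by linearity of expectation, $\Delta_p(\sigma_1,\sigma_3) = \Expl[p]{\sigma_3 - \sigma_1} = \Expl[p]{\sigma_2 - \sigma_1} + \Expl[p]{\sigma_3 - \sigma_2} = \Delta_p(\sigma_1,\sigma_2) + \Delta_p(\sigma_2,\sigma_3)$ for any three permutations, mirroring the triangle equality \cref{lm:additive} proves for $K_p$.

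First I would settle the base case. Suppose $\sigma_3 = \tau\circ\sigma_2$, where $\tau$ is the adjacent transposition with $\sigma_2(x) = \sigma_2(y) - 1$ and $\sigma_3(y) = \sigma_3(x) - 1$, so that $x$ moves one position later and $y$ one position earlier while every other element is fixed. Then $\Delta_p(\sigma_2,\sigma_3) = p_x\bigl(\sigma_3(x) - \sigma_2(x)\bigr) + p_y\bigl(\sigma_3(y) - \sigma_2(y)\bigr) = p_x - p_y$. On the divergence side, $(x,y)$ is the unique pair with $\sigma_2(x) < \sigma_2(y)$ whose order $\sigma_3$ reverses, so $K_p(\sigma_2,\sigma_3) = (p_x - p_y)\cdot 1 = p_x - p_y$, matching $\Delta_p$. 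The point to get right here is the sign bookkeeping: one must track which element is pushed to a later position so that the asymmetry of $K_p$, which sums only over pairs with $\sigma_1(i) < \sigma_1(j)$, aligns with the direction in which cost is incurred.

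For the general case I would connect $G_1$ to $G_2$ by a chain $G_1 = \pi_0, \pi_1, \dots, \pi_m = G_2$ in which consecutive permutations differ by an adjacent transposition, which exists since adjacent transpositions generate the symmetric group. Applying \cref{lm:additive} repeatedly with the first argument fixed at $\sigma_1 = G_1$ and telescoping, using $K_p(G_1,G_1) = 0$, yields $K_p(G_1,G_2) = \sum_{k=0}^{m-1} K_p(\pi_k,\pi_{k+1})$. The additivity of $\Delta_p$ gives $\Delta_p(G_1,G_2) = \sum_{k=0}^{m-1} \Delta_p(\pi_k,\pi_{k+1})$ identically, and matching the two sums term by term through the base case finishes the proof.

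I do not anticipate a genuine obstacle: once the base case is verified, the result is a clean telescoping induction, the only subtlety being that the telescoping invokes \cref{lm:additive} for an arbitrary fixed $\sigma_1$, which is exactly the form in which it is stated. As a self-contained alternative that avoids the chain decomposition, one could expand $\Delta_p(G_1,G_2) = \sum_i p_i\bigl(G_2(i) - G_1(i)\bigr)$ after writing each rank as $G(i) = \sum_j \mathds{1}\{G(j)\le G(i)\}$; the diagonal terms cancel, pairing $(i,j)$ with $(j,i)$ and using $\mathds{1}\{G(j) > G(i)\} = 1 - \mathds{1}\{G(i) > G(j)\}$ collapses each unordered pair to $(p_i - p_j)\bigl(\mathds{1}\{G_2(i) > G_2(j)\} - \mathds{1}\{G_1(i) > G_1(j)\}\bigr)$, and relabeling so that $G_1(i) < G_1(j)$ recovers $K_p(G_1,G_2)$ directly.
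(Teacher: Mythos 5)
Your proof is correct and follows essentially the same route as the paper's: a base case computing both sides as $p_x-p_y$ for a single adjacent transposition, followed by induction along a chain of adjacent transpositions, using \thref{lm:additive} to telescope $K_p$ and linearity of expectation to telescope $\Delta_p$. The direct double-counting argument you sketch at the end (expanding each rank as $G(i)=\sum_j\mathds{1}\{G(j)\le G(i)\}$ and pairing $(i,j)$ with $(j,i)$) is also valid and would bypass the chain decomposition and \thref{lm:additive} entirely, but your main argument matches the paper's.
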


\begin{proof}
Let $\sigma$ be the permutation which takes $G_1$ to $G_2$, i.e.,
$G_2=\sigma\circ G_1$. Since any permutation can be written as the composition
of a finite sequence of adjacent transpositions, for some $M\in \N$ we can
write $\sigma=\tau_M\circ \tau_{M-1}\circ \dots\circ \tau_1$, where $\tau_i$ is
an adjacent transposition for all $i\in[M]$. This gives a finite sequence of
permutations $G_1=\sigma_0,\sigma_1,\dots,\sigma_M=G_2$ such that
$\sigma_i=\tau_i\circ \sigma_{i-1}$. Let $y_i$ and $x_i$ be the \emph{original}
elements which $\tau_i$ moves one position up and down respectively. In
particular, $x_i$ and $y_i$ are defined such that
\begin{align*}
    \sigma_i(x_i) &= \sigma_{i-1}(y_i) = \sigma_{i-1}(x_i)+1, \\
    \sigma_i(y_i) &= \sigma_{i-1}(x_i) = \sigma_{i-1}(y_i)-1, \\
    \sigma_i(z) &= \sigma_{i-1}(z), \quad \forall z\notin\{x_i,y_i\}.
\end{align*}

We proceed by induction on $M$. If $M=1$, then we can write $G_2=\tau_1\circ
G_1$. Since $\tau_1$ is an adjacent transposition which inverts the positions
of $x_1$ and $y_1$ as defined above,
\begin{align*}
    \Delta_p(G_1,G_2) ={}& \sum_{k=1}^n p_k \cdot [G_2(k) - G_1(k)] \\
\begin{split}
    ={}& p_{x_1} \cdot [G_2(x_1) - G_1(x_1)] \\
    &{}+ p_{y_1} \cdot [G_2(y_1) - G_1(y_1)]
\end{split} \\
    ={}& p_{x_1} - p_{y_1} \\
    ={}& \sum_{\substack{(i,j):\\G_1(i)<G_1(j)}}
        (p_i-p_j)\cdot\mathds{1}\{G_2(i)>G_2(j)\} \\
    ={}& K_p(G_1,G_2).
\end{align*}

Suppose then that $\Delta_p(G_1,G_2)=K_p(G_1,G_2)$ for all guessing functions
which differ by at most $M$ adjacent transpositions. Let $G_1$ and $G_2$ be two
guessing functions which differ by $M+1$ adjacent transpositions, i.e.,
$G_2=\tau_{M+1}\circ\tau_M\circ\dots\circ\tau_1\circ G_1$. Let
$\sigma=\tau_{M}\circ\dots\circ\tau_1$. By definition,
\begin{align*}
    \Delta_p(G_1,G_2) &= \Expl[p]{G_2-G_1} \\
    &= \Expl[p]{G_2-\sigma\circ G_1} + \Expl[p]{\sigma\circ G_1 - G_1} \\
    &= \Delta_p(\sigma\circ G_1,G_2) + \Delta_p(G_1,\sigma\circ G_1).
\end{align*}
The permutations $G_2$ and $\sigma\circ G_1$ differ by a single adjacent
transposition and $\sigma\circ G_1$ and $G_1$ differ by $M$ adjacent
transpositions. Thus, by strong induction,
\begin{equation*}
    \Delta_p(G_1,G_2) = K_p(\sigma\circ G_1, G_2) + K_p(G_1, \sigma\circ G_1).
\end{equation*}
Finally, \thref{lm:additive} yields $\Delta_p(G_1,G_2) = K_p(G_1,G_2)$.
\end{proof}

Note that this result requires that $K_p(G_1,G_2)$ be signed and asymmetric.
We also have a stronger version of \thref{lm:additive}.

\begin{corollary}
$K_p(\sigma_1,\sigma_3)=K_p(\sigma_1,\sigma_2)+K_p(\sigma_2,\sigma_3)$ for
any permutations $\sigma_1$, $\sigma_2$, and $\sigma_3$ and for any
distribution $p$.
\end{corollary}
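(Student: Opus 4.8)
The plan is to obtain this general additivity as an immediate consequence of \thref{th:pweight}, bypassing the combinatorial bookkeeping used in the proof of \thref{lm:additive}. The crucial observation is that in our setting a guessing function \emph{is} a permutation: by definition every bijection $[n]\to[n]$ qualifies. Hence \thref{th:pweight} applies to arbitrary permutations, and for any two permutations $\sigma_a,\sigma_b$ I may write $K_p(\sigma_a,\sigma_b)=\Delta_p(\sigma_a,\sigma_b)=\Expl[p]{\sigma_b-\sigma_a}$.

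The main step is then to note that the expected-cost functional $\Delta_p$ is trivially additive by linearity of expectation. Concretely,
\begin{align*}
    \Delta_p(\sigma_1,\sigma_2)+\Delta_p(\sigma_2,\sigma_3)
    &= \Expl[p]{\sigma_2-\sigma_1}+\Expl[p]{\sigma_3-\sigma_2} \\
    &= \Expl[p]{\sigma_3-\sigma_1}
    = \Delta_p(\sigma_1,\sigma_3).
\end{align*}
Replacing each occurrence of $\Delta_p$ by the corresponding $K_p$ via \thref{th:pweight} then yields the claimed identity $K_p(\sigma_1,\sigma_3)=K_p(\sigma_1,\sigma_2)+K_p(\sigma_2,\sigma_3)$ for all three permutations and every distribution $p$.

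Because the whole argument reduces to \thref{th:pweight} together with linearity of expectation, there is essentially no obstacle to overcome: the combinatorial content has already been discharged in the proof of \thref{th:pweight}, which in turn relied only on the single-transposition case established in \thref{lm:additive}. The sole point requiring any care is to confirm that \thref{th:pweight} may legitimately be invoked with the middle permutation $\sigma_2$ appearing as both a ``target'' (in $K_p(\sigma_1,\sigma_2)$) and a ``source'' (in $K_p(\sigma_2,\sigma_3)$); this is immediate, since the theorem imposes no restriction on its two permutation arguments beyond their being guessing functions, a condition every permutation satisfies.
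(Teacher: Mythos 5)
Your argument is correct and is exactly the paper's own proof, which likewise derives the identity from \thref{th:pweight} together with linearity of expectation (the telescoping of $\Delta_p$); you have simply written out the details that the paper leaves implicit.
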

\begin{proof}
The desired result follows from \thref{th:pweight} and linearity of
expectation.
\end{proof}

When considering distributions rather than particular guessing functions, the
following corollary is more applicable.

\begin{corollary}\label{cr:delta}
For all distributions $p$ and $q$, the expected cost of guesswork under
mismatch between $p$ and $q$ is given by
\begin{equation*}
    \delta(p,q) = \min_{G_q\in\mathcal{G}_q} K_p(G_p,G_q) =
    \sum_{\substack{(i,j):\\p_i>p_j}}(p_i-p_j) \cdot\mathds{1}
    \{q_i<q_j\},
\end{equation*}
with $G_p$ being any optimal guessing function for $p$.
\end{corollary}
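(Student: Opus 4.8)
The first equality follows immediately by substituting \thref{th:pweight} into \thref{def:dist-cost}, so the plan is to establish the second. First I would reindex the defining sum of $K_p(G_p,G_q)$. By Massey's observation \cite{Mas94}, every pair with $p_i>p_j$ satisfies $G_p(i)<G_p(j)$, and the summation condition $G_p(i)<G_p(j)$ can differ from $p_i>p_j$ only on pairs with $p_i=p_j$, where the weight $(p_i-p_j)$ vanishes. Hence the sum may be taken over the pairs with $p_i>p_j$ without changing its value, which in particular confirms that the value does not depend on the choice of $G_p\in\G_p$:
\begin{equation*}
    K_p(G_p,G_q) = \sum_{\substack{(i,j):\\p_i>p_j}} (p_i-p_j)\cdot\mathds{1}\{G_q(i)>G_q(j)\}.
\end{equation*}

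Next I would split this sum according to how $G_q$ orders each pair. Applying Massey's observation to $q$, every $G_q\in\G_q$ satisfies $G_q(i)<G_q(j)$ when $q_i>q_j$ and $G_q(i)>G_q(j)$ when $q_i<q_j$; only on pairs with $q_i=q_j$ is the relation left free. Thus pairs with $q_i>q_j$ contribute nothing, pairs with $q_i<q_j$ contribute the fixed amount $(p_i-p_j)$, and all dependence on the particular $G_q$ is confined to the ``tie'' pairs with $q_i=q_j$:
\begin{equation*}
\begin{split}
    K_p(G_p,G_q) ={}& \sum_{\substack{(i,j):\\p_i>p_j}} (p_i-p_j)\cdot\mathds{1}\{q_i<q_j\} \\
    &{}+ \sum_{\substack{(i,j):\\p_i>p_j,\,q_i=q_j}} (p_i-p_j)\cdot\mathds{1}\{G_q(i)>G_q(j)\}.
\end{split}
\end{equation*}
The first sum is exactly the claimed expression, so it remains to show that the minimum over $G_q\in\G_q$ of the second (tie) sum is zero.

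Since each weight $(p_i-p_j)$ in the tie sum is strictly positive, that sum is nonnegative, giving the claimed expression as a lower bound; the remaining work is to exhibit an optimal $G_q$ attaining it. The plan is to partition $[n]$ into blocks of equal $q$-probability and to rank each block internally by decreasing $p$-probability (breaking any residual $p$-ties arbitrarily), while ordering the blocks by decreasing $q$-probability. Any such ranking lies in $\G_q$, and for it every tie pair with $p_i>p_j$ sits in a single block with $i$ ranked before $j$, so its indicator is zero and the tie sum vanishes. The main obstacle is this final construction: one must check both that sorting each $q$-block by $p$ yields a genuine member of $\G_q$ (which holds because an optimal $G_q$ is free to order elements of equal $q$-probability) and that this simultaneously zeros every tie term. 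Combining the lower bound with this attaining $G_q$ yields $\min_{G_q\in\G_q} K_p(G_p,G_q)=\sum_{p_i>p_j}(p_i-p_j)\cdot\mathds{1}\{q_i<q_j\}$, completing the second equality.
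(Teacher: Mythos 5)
Your proof is correct and follows essentially the same route as the paper's: reindex the sum over pairs with $p_i>p_j$ since $p$-ties carry zero weight, and dispose of the $q$-ties by noting that the minimizing $G_q\in\G_q$ can order each block of equal $q$-probability consistently with $G_p$, so those indicators vanish. You make explicit the lower bound via nonnegativity of the tie terms and the construction of the attaining $G_q$, details the paper's proof only asserts, but the underlying argument is identical.
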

\begin{proof}
The first equality follows directly from \thref{th:pweight}. The second
equality follows from that fact that if $G_p$ is an optimal guessing function
for $p$, then $G_p(i)< G_p(j)$ implies that $p_i\geq p_j$. This inequality can
be made strict, as $p_i-p_j=0$ if $p_i=p_j$. Similarly, $G_q(i)> G_q(j)$
implies that $q_i\leq q_j$. This inequality can also be made strict, since if
$G_q^\ast$ is the guessing function which achieves the minimum expected cost
and if $q_i=q_j$, then $G_p(i)< G_p(j)$ implies that $G_q^\ast(i)<
G_q^\ast(j)$.
\end{proof}

\section{Guesswork and a Statistical Distance}\label{sec:statistical}

The goal of this section is to bound $\delta(p,q)$ given a bound on
$\dtv(p,q)$. This connection between guesswork under mismatch and statistical
distance is achieved through a combinatorial argument concerning the summands
of the form $(p_i-p_j)$ in the probability-weighted Kendall tau signed
divergence.  Representing each of these terms by the corresponding pair
$(i,j)$, we show how these pairs can be grouped into sets and how to bound the
sum over each set. To simplify notation, we write $k\in(i,j)$ to denote that
either $k=i$ or $k=j$.

\begin{definition}
A set $M\subset[n]\times[n]$ is a \emph{set of disjoint pairs} if $i\neq j$
for all $(i,j)\in M$ and if, for all $k\in[n]$, there is at most one
element $(i,j)\in M$ such that $k\in (i,j)$.
\end{definition}

\begin{example}\label{ex:disjoint}
These are five sets of disjoint pairs for $n=5$.
\begin{equation*}
\begin{array}{c|c|c|c|c}
 M_1  &  M_2  &  M_3  &  M_4  &  M_5  \\ \hline\hline
(2,5) & (1,3) & (2,4) & (3,5) & (1,4) \\
(3,4) & (4,5) & (1,5) & (1,2) & (2,3)
\end{array}
\end{equation*}
\end{example}

The following lemma formalizes how sets of disjoint pairs can be used to
bound part of the expected cost of guesswork under mismatch given a bound on
the total variation.
\begin{lemma}\label{lm:evens}
Let $M$ be a set of disjoint pairs and let $p$ and $q$ be distributions such
that $p_i\geq p_j$ and $q_i\leq q_j$ for all $(i,j)\in M$. If $\dtv(p,q)\leq
\epsilon$, then $\sum_{(i,j)\in M} (p_i-p_j)\leq 2\epsilon$.
\end{lemma}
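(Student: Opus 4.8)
The plan is to bound each summand $p_i-p_j$ pointwise by the total variation contributions of the indices $i$ and $j$, and then use the disjointness of the pairs in $M$ to sum these bounds without overcounting. The starting point is the algebraic identity $p_i-p_j=(p_i-q_i)+(q_i-q_j)-(p_j-q_j)$, which introduces the matching $q$-values. By hypothesis $q_i\leq q_j$, so the middle term $q_i-q_j$ is nonpositive and can be discarded, yielding $p_i-p_j\leq(p_i-q_i)-(p_j-q_j)$ for every $(i,j)\in M$.

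Next I would apply the triangle inequality to the right-hand side to obtain $p_i-p_j\leq\abs{p_i-q_i}+\abs{p_j-q_j}$. Summing this over all pairs in $M$ gives
\begin{equation*}
    \sum_{(i,j)\in M}(p_i-p_j)\leq\sum_{(i,j)\in M}\big(\abs{p_i-q_i}+\abs{p_j-q_j}\big).
\end{equation*}

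The key step is then to invoke the defining property of a set of disjoint pairs: each index $k\in[n]$ lies in at most one pair of $M$, so each term $\abs{p_k-q_k}$ occurs at most once on the right-hand side. Consequently the right-hand sum is bounded above by $\sum_{k=1}^n\abs{p_k-q_k}=2\dtv(p,q)\leq2\epsilon$, which is exactly the claimed inequality.

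I do not anticipate a serious obstacle; the argument is short once the decomposition is chosen. The one point worth emphasizing is that it is disjointness, not the sign hypothesis on $p$, that prevents double-counting the total variation mass. The assumption $p_i\geq p_j$ is in fact not used in the chain above---it only guarantees that each summand $p_i-p_j$ is nonnegative, so that the left-hand sum is a genuine cost---whereas $q_i\leq q_j$ is essential, since it is what licenses dropping the middle term in the decomposition.
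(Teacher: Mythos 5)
Your proof is correct and uses the same essential ingredients as the paper's---the triangle inequality applied to $\abs{p_i-q_i}+\abs{p_j-q_j}$, disjointness to avoid double-counting total variation mass, and the sign condition on $q$---just written as a direct chain of inequalities rather than the paper's proof by contradiction. Your side observation that the hypothesis $p_i\geq p_j$ is not actually needed is also accurate; the paper's own chain likewise uses only $q_i\leq q_j$ at the corresponding step.
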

\begin{proof}
Suppose that $\sum_{(i,j)\in M} (p_i-p_j) > 2\epsilon$. Then,
\begin{align}
    \dtv(p,q) &= \frac{1}{2}\sum_{k=1}^n \abs{p_k-q_k} \nonumber \\
    &\geq \frac{1}{2} \sum_{(i,j)\in M} \left(\abs{p_i-q_i} +
        \abs{q_j-p_j}\right) \label{eq:e-pos} \\
    &\geq \frac{1}{2} \sum_{(i,j)\in M}\abs{p_i-p_j + q_j-q_i} \label{eq:e-tri}
    \\
    &\geq \frac{1}{2}\sum_{(i,j)\in M}(p_i-p_j) \label{eq:e-sump} \\
    &> \epsilon, \nonumber
\end{align}
where \eqref{eq:e-pos} follows by dropping non-negative terms, \eqref{eq:e-tri}
follows from the triangle inequality, and \eqref{eq:e-sump} follows from the
assumption that $p_i\geq p_j$ and $q_i\leq q_j$ for all $(i,j)\in M$.
\end{proof}
Informally, \thref{lm:evens} says that if $G_q$ permutes a set of disjoint
pairs relative to $G_p$, then the sum over the terms $(p_i-p_j)$ corresponding
to each permuted pair $(i,j)$ can be bounded in terms of the total variation
between $p$ and $q$.

\thref{lm:evens} and some established results in combinatorics directly lead to
\thref{th:dbound}. For simplicity, the proof given here only considers the case
when the number of symbols $n$ is even.
\iflong
The proof for odd $n$ involves some additional technical details which are
deferred to \cref{sec:odds}.
\else
The proof for odd $n$ uses the same fundamental ideas, but involves additional
technical details.  The complete proof can be found in the extended version of
this paper TODO.
\fi

\begin{theorem}\label{th:dbound}
If $p$ and $q$ are distributions on $[n]$ such that $\dtv(p,q)\leq \epsilon$,
then $\delta(p,q)\leq 2(n-1)\epsilon$.
\end{theorem}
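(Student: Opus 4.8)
The plan is to combine \thref{cr:delta} with a decomposition of the complete set of index pairs into a small collection of sets of disjoint pairs, apply \thref{lm:evens} to each piece, and sum the resulting bounds. By \thref{cr:delta},
\[
    \delta(p,q) = \sum_{\substack{(i,j):\\ p_i>p_j}}(p_i-p_j)\cdot\mathds{1}\{q_i<q_j\},
\]
so only the pairs $(i,j)$ with $p_i>p_j$ and $q_i<q_j$ contribute. Orienting each contributing pair with its larger-probability element first, every such pair satisfies the hypotheses $p_i\geq p_j$ and $q_i\leq q_j$ required by \thref{lm:evens}, and its summand equals $p_i-p_j$.

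First I would invoke the classical $1$-factorization of the complete graph $K_n$: when $n$ is even, the $\binom{n}{2}$ edges of $K_n$ decompose into exactly $n-1$ perfect matchings $M_1,\dots,M_{n-1}$ (this is the standard round-robin scheduling construction, equivalently the statement that the chromatic index of $K_n$ equals $n-1$ for even $n$). Identifying vertices with elements of $[n]$ and edges with unordered index pairs, each $M_k$ is a set of pairwise disjoint pairs, and every unordered pair $\{i,j\}$ lies in exactly one $M_k$.

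Next, for each matching $M_k$ I would keep only the contributing pairs, forming the oriented subset $\tilde M_k\subseteq M_k$ consisting of those $(i,j)$ with $p_i>p_j$ and $q_i<q_j$ and discarding the rest, which contribute nothing to $\delta(p,q)$. A subset of a matching is again a set of disjoint pairs, and by the chosen orientation $\tilde M_k$ meets the hypotheses of \thref{lm:evens}, whence $\sum_{(i,j)\in\tilde M_k}(p_i-p_j)\leq 2\epsilon$. Summing over $k$ and using that the $M_k$ partition all unordered pairs---so the $\tilde M_k$ partition precisely the contributing pairs, each counted once---gives $\sum_{k=1}^{n-1}\sum_{(i,j)\in\tilde M_k}(p_i-p_j)=\delta(p,q)$, and therefore $\delta(p,q)\leq 2(n-1)\epsilon$.

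The main obstacle is entirely the combinatorial input: securing a decomposition into as few sets of disjoint pairs as possible, since the number of matchings is exactly the factor multiplying $2\epsilon$. For even $n$ the $1$-factorization supplies this cleanly with $n-1$ matchings. For odd $n$ the naive analogue uses $n$ near-perfect matchings (each omitting a single element, as in \thref{ex:disjoint}) and yields only the weaker bound $2n\epsilon$; recovering the sharper factor $n-1$ requires the more delicate accounting deferred to the appendix, which is why the proof here is restricted to even $n$.
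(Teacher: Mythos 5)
Your proof is correct and follows essentially the same route as the paper: decompose the $\binom{n}{2}$ pairs into $n-1$ perfect matchings of $K_n$ (the paper phrases this as a tournament design, citing the same construction), apply \thref{lm:evens} to each matching, and sum to get $2(n-1)\epsilon$. If anything, your restriction to the contributing subsets $\tilde M_k$ applies \thref{lm:evens} more carefully than the paper's own proof, which first drops the indicator to bound $\delta(p,q)$ by $\sum_{i<j}(p_i-p_j)$ and then invokes the lemma on the full matchings, a step whose hypothesis $q_i\leq q_j$ only holds in the worst case.
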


\begin{proof}
By \thref{cr:delta},
\begin{equation}\label{eq:terms}
    \delta(p,q) = \sum_{\substack{(i,j):\\ p_i>p_j}} (p_i-p_j)\cdot
        \mathds{1}\{q_i < q_j\}.
\end{equation}
Without loss of generality, we may assume that the elements of $[n]$ are
labeled such that $p_i\geq p_j$ for all $i<j$. Each term in \eqref{eq:terms}
can be associated with the pair $(i,j)$, yielding $\binom{n}{2}$ pairs.

If $n$ is even, then we can construct a tournament design on $n$, i.e., all
$\binom{n}{2}$ possible pairs of distinct elements of $[n]$ can be arranged
into a $\frac{n}{2}\times(n-1)$ array such that every element is contained in
precisely one cell of each column \cite[\S 51.1]{CD10}. This is equivalent to
constructing $n-1$ sets of disjoint pairs $M_1,\dots,M_{n-1}$ whose union
covers all $\binom{n}{2}$ possible pairs. Given such a collection of sets, we
can write
\begin{align}
    \delta(p,q) &\leq \sum_{i<j} (p_i-p_j) \label{eq:e-sum}
    \\
    &= \sum_{r=1}^{n-1} \; \sum_{(i,j)\in M_r} (p_i-p_j) \label{eq:to-bound}.
\end{align}
Equality is achieved in \eqref{eq:e-sum} when $q_i<q_j$ for all $i<j$.
Combining this with the assumption that $\dtv(p,q)\leq\epsilon$, we can bound
\eqref{eq:to-bound} using \thref{lm:evens}, which yields
\begin{equation*}
    \delta(p,q) \leq \sum_{r=1}^{n-1} 2\epsilon = 2(n-1)\epsilon.
\end{equation*}
\end{proof}

The bound in \thref{th:dbound} is not tight for all values of $\epsilon$. In
particular, since $\delta(p,q)\leq n-1$ trivially, the bound is not meaningful
for $\epsilon>1/2$. However, the following example shows that there exist
distributions $p$ and $q$ for which $\delta(p,q)$ is arbitrarily close to
$2(n-1)\epsilon$ for $\epsilon\leq 1/n$.
\begin{example}\label{ex:optimal}
Let $p$ be the distribution given by
\begin{equation*}
    p_i = \begin{cases}
        1/n + \gamma\epsilon & i=1, \\
        1/n & 1<i<n, \\
        1/n - \gamma\epsilon & i=n,
    \end{cases}
\end{equation*}
with $0<\gamma<1$ and $\epsilon\leq 1/n$. Let $q$ be a distribution such that
$q_i<q_j$ for all $i<j$ and $\dtv(p,q)\leq \epsilon$. Such a distribution $q$
always exists since $\gamma<1$; it will be very close to uniform distribution,
but without any two elements having exactly the same probabilities. By
\thref{cr:delta},
\begin{align*}
    \delta(p,q) &= \sum_{\substack{(i,j):\\ p_i>p_j}} (p_i-p_j)\cdot
        \mathds{1}\{q_i < q_j\} \\
    &= \sum_{i=2}^n (p_1-p_i) + \sum_{j=2}^n (p_j-p_n) \\
    &= (n-1)(p_1-p_n) \\
    &= \gamma\cdot2(n-1)\epsilon.
\end{align*}
Thus, this choice of $p,q$ comes within a factor of $\gamma$ (which can be
arbitrarily close to 1) of the bound in \thref{th:dbound}.
\end{example}


\begin{figure*}[!ht]
\centering
\subfloat[]{\includegraphics[width=.50\linewidth]{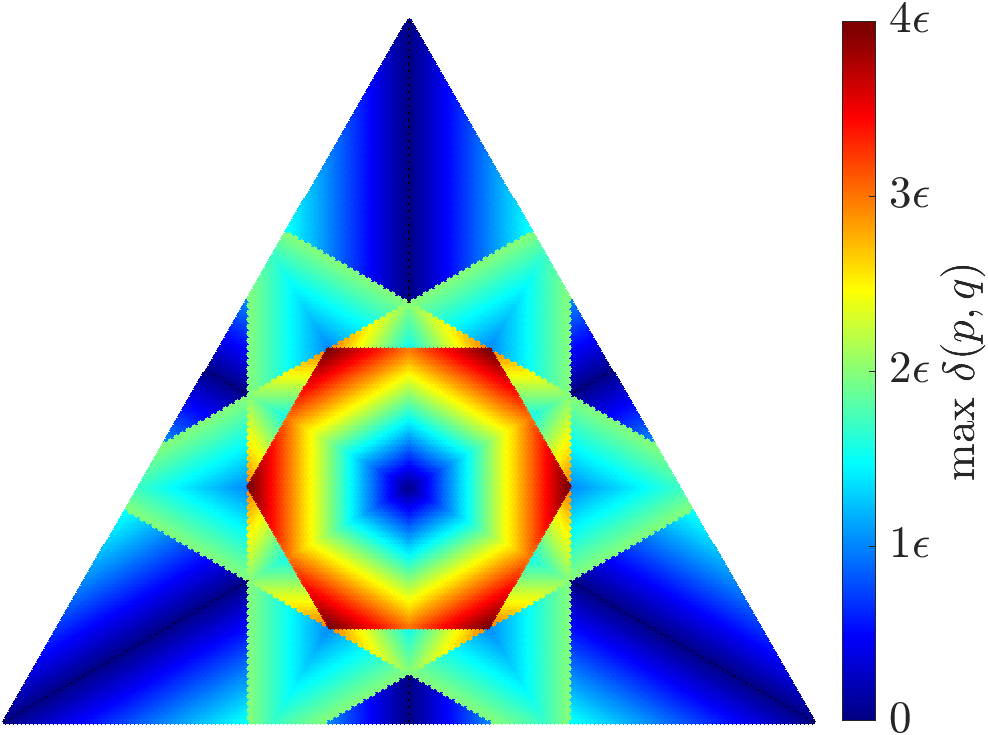}\label{fig:s-a}}
\hfil
\subfloat[]{\includegraphics[width=.50\linewidth]{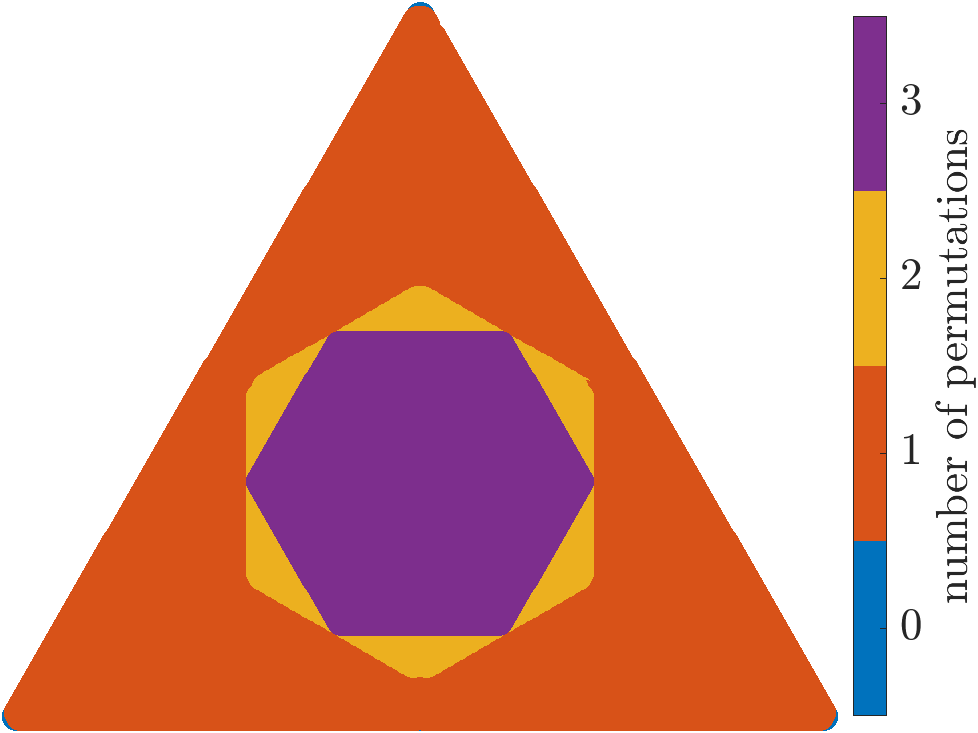}\label{fig:s-b}}
\caption{A visualization of the expected cost of guesswork under mismatch on
the 3-dimensional probability simplex. Each point on the simplex corresponds to
a different distribution over 3 symbols. The corners correspond to
distributions which assign all probability to a single symbol.  The centroid
corresponds to the uniform distribution. (a) The maximum expected cost of
guesswork under mismatch given that the mismatched distribution lies within a
total variation radius of $\epsilon=0.2$. (b) The maximum number of adjacent
transpositions between the optimal guessing function for the distribution at
each point and the optimal guessing functions for points within a total
variation radius of $\epsilon=0.2$.}
\label{fig:simplex}
\end{figure*}

This example suggests that $2(n-1)\epsilon$ is the supremum, rather than the
maximum, of $\delta(p,q)$ over distributions on $n$ symbols such that
$\dtv(p,q)\leq \epsilon$ in the regime $\epsilon\leq 1/n$. The technique used
for \thref{th:dbound} does not give a strict inequality, however, since it does
not take into account that there are multiple optimal guessing functions for
distributions which assign the same probability to multiple points. In
particular, if we consider the distributions given in \thref{ex:optimal} and
let $\gamma=1$, then the uniform distribution satisfies the total variation
bound. However, since any guessing function is optimal for the uniform
distribution, it follows that $\delta(p,q)=0$ if $q$ is uniform. Furthermore,
if $\gamma=1$, then there does not exist a $q$ within $\epsilon$ total
variation of $p$ such that $q_i<q_j$ for all $i<j$, and hence, there does not
exist any $q$ within $\epsilon$ total variation of $p$ such that
$\delta(p,q)=2(n-1)\epsilon$.

To visualize this phenomenon and conceptualize how the cost of guesswork under
mismatch within a fixed total variation radius varies, consider
\Cref{fig:simplex}, which illustrates the behavior for $n=3$. In
\Cref{fig:s-a}, we plot the maximum achievable expected cost under mismatch
within a radius of $\epsilon=0.2$ around each point on the simplex. Every such
point corresponds to a unique distribution on $[n]$, and that point is taken to
be the center around which we consider distributions for mismatch within the
radius of $\epsilon$. In \Cref{fig:s-b}, we plot the maximum number of adjacent
transpositions achievable between optimal guessing functions for the
distributions within the same radius. In three dimensions, the projection of
the $L^1$ ball onto the simplex is a regular hexagon, which gives rise to the
geometric patterns visible in \cref{fig:simplex}.

The distribution $p$ given in \thref{ex:optimal} corresponds to the most red
points in \cref{fig:s-a}. It is easy to see that, for $n=3$, there are sharp
jumps in the maximum achievable expected cost under mismatch as the maximum
achievable number of adjacent transpositions increases. A cost of $2\epsilon$
is the maximum achievable with 1 transposition, while $3\epsilon$ is the
maximum with 2, and $4\epsilon$ with 3. Although this phenomenon does appear to
some degree in higher dimensions, it is not necessarily true that the maximum
achievable expected cost under mismatch is strictly increasing in the maximum
number of achievable adjacent transpositions. In \thref{ex:optimal}, for
instance, only $2n-3$ adjacent transpositions are needed to achieve an expected
cost which is arbitrarily close to the maximum, while $n(n-1)/2$ adjacent
transpositions are possible.

\section{Conclusions and Future Work}\label{sec:conclusion}

In this paper, we analyzed the cost of guesswork under mismatch from a
non-asymptotic perspective. We showed that the total variation distance exactly
captures the maximum possible expected cost under mismatch between two
distributions if the number of symbols is considered constant. The connections
developed between permutation metrics and guesswork highlight the
combinatorial aspects of the problem and readily suggest possible extensions.

We notably only investigated the first moment of guesswork, which facilitated
the connection with the Kendall tau distance. A natural next step would be to
consider higher moments and formulate a more general framework for
combinatorial analysis. Similarly, it would be interesting to consider how
different statistical metrics or simplex geometries, such as those discussed in
\cite{NS19}, can be related to guesswork under mismatch.

\printbibliography

@article{Ken38,
    title={A New Measure of Rank Correlation},
    author={Kendall, Maurice G.},
    journal={Biometrika},
    volume={30},
    number={1/2},
    pages={81--93},
    year={1938}
}

@inproceedings{KV10,
    title={Generalized Distances between Rankings},
    author={Kumar, Ravi and Vassilvitskii, Sergei},
    booktitle={Proceedings of the 19th International Conference on World Wide
      Web},
    pages={571--580},
    year={2010}
}

@inproceedings{Mas94,
    title={Guessing and Entropy},
    author={Massey, James L},
    booktitle={Proceedings of IEEE International Symposium on
        Information Theory},
    pages={204},
    year={1994},
}

@book{CD10,
  title={Handbook of Combinatorial Designs},
  author={Colbourn, Charles J. and Dinitz, Jeffrey H.},
  year={2010},
  publisher={CRC Press}
}

@inbook{NS19,
  title={Clustering in {H}ilbert’s Projective Geometry: The Case Studies of the
  Probability Simplex and the Elliptope of Correlation Matrices},
  author={Nielsen, Frank and Sun, Ke},
  booktitle={Geometric Structures of Information},
  pages={297--331},
  year={2019},
  publisher={Springer}
}

@article{Sun06,
  title={Guessing under Source Uncertainty},
  author={Sundaresan, Rajesh},
  journal={IEEE Transactions on Information Theory},
  volume={53},
  number={1},
  pages={269--287},
  year={2006}
}

@inproceedings{SLB+19,
  title={Mismatched Guesswork and One-to-One Codes},
  author={Salamatian, Salman and Liu, Litian and Beirami, Ahmad and M{\'e}dard,
      Muriel},
  booktitle={IEEE Information Theory Workshop},
  pages={1--5},
  year={2019}
}

@article{BCC+18,
  title={A Characterization of Guesswork on Swiftly Tilting Curves},
  author={Beirami, Ahmad and Calderbank, Robert and Christiansen, Mark M and
  Duffy, Ken R and M{\'e}dard, Muriel},
  journal={IEEE Transactions on Information Theory},
  volume={65},
  number={5},
  pages={2850--2871},
  year={2018}
}

@article{AM98,
  title={Guessing Subject to Distortion},
  author={Arikan, Erdal and Merhav, Neri},
  journal={IEEE Transactions on Information Theory},
  volume={44},
  number={3},
  pages={1041--1056},
  year={1998}
}

@article{HS10,
  title={Guessing Revisited: A Large Deviations Approach},
  author={Hanawal, Manjesh Kumar and Sundaresan, Rajesh},
  journal={IEEE Transactions on Information Theory},
  volume={57},
  number={1},
  pages={70--78},
  year={2010}
}

@article{CD12,
  title={Guesswork, Large Deviations, and {S}hannon Entropy},
  author={Christiansen, Mark M and Duffy, Ken R},
  journal={IEEE Transactions on Information Theory},
  volume={59},
  number={2},
  pages={796--802},
  year={2012}
}

@article{Ari96,
  title={An Inequality on Guessing and its Application to Sequential Decoding},
  author={Arikan, Erdal},
  journal={IEEE Transactions on Information Theory},
  volume={42},
  number={1},
  pages={99--105},
  year={1996}
}

@inproceedings{BCC+15,
  title={A Geometric Perspective on Guesswork},
  author={Beirami, Ahmad and Calderbank, Robert and Christiansen, Mark and
      Duffy, Ken and Makhdoumi, Ali and M{\'e}dard, Muriel},
  booktitle={53rd Annual Allerton Conference on Communication, Control, and
      Computing},
  pages={941--948},
  year={2015}
}

\clearpage\pagebreak

\appendices\crefalias{section}{appendix}

\section{An Example Concerning Definition 5}\label{sec:min}

\begin{example}
Let $n=4$ and $p$, $q$ be distributions given by:
\begin{equation*}
\begin{array}{c|cccc}
  & 1    & 2    & 3    & 4    \\ \hline
p & 0.40 & 0.30 & 0.20 & 0.10 \\
q & 0.60 & 0.20 & 0.10 & 0.10 \\
\end{array}
\end{equation*}
The distribution $p$ has a unique optimal guessing function:
\begin{equation*}
\begin{array}{c|cccc}
     & 1 & 2 & 3 & 4 \\ \hline
G_p  & 1 & 2 & 3 & 4
\end{array}
\end{equation*}
The distribution $q$ has two optimal guessing functions:
\begin{equation*}
\begin{array}{c|cccc}
     & 1 & 2 & 3 & 4 \\ \hline
G_q  & 1 & 2 & 3 & 4 \\
G_q' & 1 & 2 & 4 & 3
\end{array}
\end{equation*}
In this case, $\Delta_p(G_p,G_q)=0$ and $\Delta_p(G_p,G_q')=0.1$. Hence,
$\delta(p,q)=0$, due to the minimum in \thref{def:dist-cost}.
\end{example}

\section{Proof of Theorem 2 for odd \texorpdfstring{$n$}{n}}\label{sec:odds}

The setup for handling distributions over an odd number of symbols is largely
the same as for an even number, except for the fact that we instead
constructing an odd tournament design, yielding an $(n-1)/2 \times n$ array of
disjoint pairs. This would only give a bound of $2n\epsilon$ if we were to
apply \thref{lm:evens}.

The necessary additional detail is that, if $n$ is odd, for any set of
disjoint pairs there must exist at least one element of $[n]$ which does not
appear in any pair. In this setting, we can consider \emph{pairs of sets} which
leave out different symbols.
\begin{definition}\label{def:bridge}
Let $n$ be odd and let $M_1$ and $M_2$ be sets of disjoint pairs such that
$\abs{M_1}=\abs{M_2}=(n-1)/2$ and there exist distinct, unique elements
$k_1,k_2\in[n]$ such that $k_1\notin m_1$ for all $m_1\in M_1$ and $k_2\notin
m_2$ for all $m_2\in M_2$. The pair $(k_1,k_2)$ is defined to be the
\emph{bridge pair} for $M_1$ and $M_2$.
\end{definition}
\begin{example}\label{ex:bridge-pairs}
For the sets listed in \thref{ex:disjoint}, the pair $(2,5)$ is the bridge pair
for $M_2$ and $M_5$, while $(3,4)$ is the bridge pair for $M_3$ and $M_4$.
Indeed, the sets are labeled such that $(i,j)$ is the bridge pair for $M_i$ and
$M_j$.
\end{example}
Note that it does not make sense to consider \thref{def:bridge} when $n$ is
even, if the sets in question are not of maximal size, or if the sets in
question do not leave out distinct elements. When a bridge pair does exist,
however, it is by definition unique.

The following lemma is an analogue of \thref{lm:evens} which accounts for
bridge pairs.
\begin{lemma}\label{lm:odds}
Let $n$ be odd and let $M_1$ and $M_2$ be sets of disjoint pairs for which
there exists a bridge pair $(k_1,k_2)$. Let $M=M_1\cup M_2\cup \{(k_1,k_2)\}$.
Let $p$ and $q$ be distributions such that $p_i\geq p_j$ and
$q_i\leq q_j$ for all $(i,j)\in M$. If $\dtv(p,q)\leq\epsilon$, then
$\sum_{(i,j)\in M} (p_i-p_j) \leq 4\epsilon$.
\end{lemma}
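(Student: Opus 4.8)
The plan is to follow the argument of \thref{lm:evens} almost verbatim, with one new combinatorial observation supplying the extra factor of two. The crucial fact I would establish first is that every element of $[n]$ appears in \emph{exactly} two of the pairs comprising $M = M_1 \cup M_2 \cup \{(k_1,k_2)\}$. This follows from the structure of a bridge pair: since $M_1$ is a set of disjoint pairs of size $(n-1)/2$, it covers $n-1$ distinct elements, i.e.\ all of $[n]$ except $k_1$; symmetrically, $M_2$ covers all of $[n]$ except $k_2$; and the bridge pair $\{(k_1,k_2)\}$ reintroduces exactly the two omitted elements. Checking the three cases $k=k_1$, $k=k_2$, and $k\notin\{k_1,k_2\}$ shows that each contributes $0+1+1$, $1+0+1$, and $1+1+0$ respectively, so each element of $[n]$ is covered precisely twice.

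With this multiplicity established, I would bound $\sum_{(i,j)\in M}(p_i-p_j)$ directly. Because each element appears in exactly two pairs, summing the per-element deviations over $M$ double-counts every coordinate, giving $\sum_{(i,j)\in M}\bigl(\abs{p_i-q_i}+\abs{p_j-q_j}\bigr) = 2\sum_{k=1}^n \abs{p_k-q_k} = 4\,\dtv(p,q) \leq 4\epsilon$. I would then estimate each summand from below exactly as in \thref{lm:evens}: the triangle inequality gives $\abs{p_i-q_i}+\abs{p_j-q_j}\geq \abs{(p_i-p_j)-(q_i-q_j)}$, and the hypotheses $p_i\geq p_j$ and $q_i\leq q_j$ force $(p_i-p_j)-(q_i-q_j)\geq p_i-p_j\geq 0$, so each summand is at least $p_i-p_j$. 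Chaining these inequalities yields $\sum_{(i,j)\in M}(p_i-p_j)\leq 4\epsilon$, as required.

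The main obstacle---really the only step not already handled by \thref{lm:evens}---is verifying the exact-multiplicity-two property, since this is precisely what upgrades the factor of $2$ in the even case to the factor of $4$ here. The subsequent estimate is a routine adaptation in which the doubling is absorbed cleanly into the constant; I do not anticipate difficulty from the triangle-inequality and sign-cancellation steps, which are identical in spirit to those used for \thref{lm:evens}. One could equivalently run the whole argument by contradiction, assuming $\sum_{(i,j)\in M}(p_i-p_j) > 4\epsilon$ and deriving $\dtv(p,q)>\epsilon$, to match the presentation style of \thref{lm:evens}.
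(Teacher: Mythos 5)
Your proof is correct and takes essentially the same route as the paper's: the key observation that every element of $[n]$ is covered exactly twice by the pairs of $M$ (which the paper states in one line and you verify by cases), followed by the same triangle-inequality and sign argument as in \thref{lm:evens}, presented directly rather than by contradiction. No gaps.
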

\begin{proof}
Suppose that $\sum_{(i,j)\in M} (p_i-p_j) > 4\epsilon$. Then,
\begin{align}
    4\dtv(p,q) &= \sum_{i=1}^n \abs{p_i-q_i} + \sum_{i=1}^n \abs{p_i-q_i}
    \nonumber
    \\
    \begin{split}\label{eq:o-tri}
    &\geq \sum_{(i,j)\in M_1} \abs{p_i-p_j + q_j - q_i}  \\
    &\quad+ \sum_{(i,j)\in M_2} \abs{p_i-p_j + q_j - q_i}  \\
    &\quad+ \abs{p_{k_1}-p_{k_2}+q_{k_2}-q_{k_1}} \\
    \end{split} \\
    &\geq \sum_{(i,j)\in M} (p_i-p_j) \label{eq:o-sump} \\
    &> 4\epsilon, \nonumber
\end{align}
where \eqref{eq:o-tri} follows from the triangle inequality and the fact that,
by the definition of a bridge pair, each element of $[n]$ appears twice amongst
all the pairs in $M$, while \eqref{eq:o-sump} follows from the assumption that
$p_i\geq p_j$ and $q_i\leq q_j$ for all $(i,j)\in M$.
\end{proof}

To prove \thref{th:dbound} for odd $n$, we use a similar construction of sets
of disjoint pairs based on a tournament design. For odd $n$, we can treat one
of the sets of disjoint pairs as a set of bridge pairs and save the extra
factor of $n$ which arises in an odd tournament design by applying
\thref{lm:odds} in place of \thref{lm:evens}.

\begin{proof}[of \thref{th:dbound}, cont.]
Recall that we may assume without loss of generality that $p_i\geq p_j$ for all
$i<j$. If $n$ is odd, then we can construct an odd tournament design, i.e., all
$\binom{n}{2}$ possible pairs can be arranged into an $\frac{n-1}{2}\times n$
array such that every element is contained in at most one cell of each column
and there is exactly one element missing from each column \cite[\S 51.1]{CD10}.
This is equivalent to constructing $n$ sets of disjoint pairs $M_1,\dots,M_n$
such that there exists a distinct bridge pair between any two of them.

In particular, if we label the sets $M_1,\dots,M_n$ according to the elements
which are missing from each set, then each pair $(i,j)\in M_1$ is a bridge pair
for $M_i$ and $M_j$. It is also follows from the properties of the odd
tournament design that no two elements of $M_1$ will be bridge pairs for the
same sets. Let $M_{i,j}=M_{i}\cup M_{j}\cup\{(i,j)\}$. Then,
\begin{align}
    \delta(p,q) &\leq \sum_{i<j} (p_i-p_j) \label{eq:odd-ineq} \\
    &= \sum_{(i,j)\in M_1}\;
        \sum_{(k,l)\in M_{(i,j)}} (p_k-p_l) \label{eq:bridge-const} \\
    &\leq \sum_{(i,j)\in M_1} 4\epsilon \label{eq:lem-odds} \\
    &= 2(n-1)\epsilon \nonumber,
\end{align}
where \eqref{eq:bridge-const} follows from the bridge pair construction and
\eqref{eq:lem-odds} follows from \thref{lm:odds}, using the fact that
equality in \eqref{eq:odd-ineq} is achieved when $q_i<q_j$ for all $i<j$.
\end{proof}

\end{document}